\newtheorem{theorem}{Theorem}[section]
\newtheorem{lemma}[theorem]{Lemma}
\theoremstyle{proposition}
\newtheorem{proposition}[theorem]{Proposition}
\theoremstyle{corollary}
\theoremstyle{definition}
\newtheorem{definition}[theorem]{Definition}
\theoremstyle{remark}
\newtheorem{remark}[theorem]{Remark}
\numberwithin{equation}{section}
\newcommand{\F}{\mathbb F}
\begin{document}

\email{caglar21@itu.edu.tr, nari15@itu.edu.tr, ozdemiren@itu.edu.tr}
\title{An Application of Nodal Curves  }
\author{Selin Caglar}
\author{Kubra Nari}
\author{Enver Ozdemir}
\address{Informatics Institute, Istanbul Technical University}

\begin{abstract}

 In this work, we present an efficient method for computing in the generalized Jacobian of special singular curves, nodal curves. The efficiency of the operation is due to the representation of an element in the Jacobian group by a single polynomial. In addition, we propose a probabilistic public key algorithm  as an application of nodal curves.

\end{abstract}
\keywords{ Jacobian group, nodal curves, Mumford representation, Cantor's Algorithm, public-key encryption}
\thanks{Classification: 11G20, 11Y99, 94A60 }
\maketitle

\specialsection*{\bf \Large Introduction}
\indent The Jacobian groups of smooth curves, especially for those belonging to the elliptic and hyperelliptic curves, have been rigorously investigated \cite{Anderson,Cantor, Mumford} due to their use in computational number theory and cryptography \cite{CohFrey, KOBEL, KOBHYP,HLEN, VMIL}. Even though the singular counterparts of these curves have simple geometric structures, the generalized Jacobian groups of these curves might be potential candidates for further applications in computational number theory and cryptography.

\indent An element in the Jacobian of a hyperelliptic curve is represented by a pair of polynomials $(u(x),v(x))$ satisfying certain conditions \cite{Mumford}. The situation is the same for higher degree curves. For example, an element $D$ in the Jacobian of a superelliptic curve $S:y^3=g(x)$ is represented by  a triple  of  polynomials $(s_1(x),s_2(x),s_3(x))$ satisfying certain conditions \cite{Bauer}. Therefore, for a smooth curve, we do not have the liberty to choose any polynomial $u(x)$ and say that it is a coordinate of an element in the Jacobian group of a given curve. On the other hand, we show that one can treat almost any polynomial $h(x)$ as an element of the Jacobian of a nodal curve. In other words, a random element in a generalized Jocabian of a singular curve can easily be selected which eventually might encourage researchers to work with these curves for further applications in the related areas in addition to \cite{OzdemirPOL}. In this respect, we present an application where nodal curves are employed towards the construction of a probabilistic public-key cryptosystem.\\
\indent In the first part of the paper, we present an efficient method to perform group operation in the Jacobians of nodal curves based on the work \cite{OzdemirPHD}. The method is basically a modification of Mumford representation\cite{Mumford} and Cantor's algorithm\cite{Cantor}. We note that in the work \cite{OzdemirPHD}, Mumford representation and Cantor's algorithm are extended for general singular curves. For our purposes, a nodal curve $N$ over a finite field $\F_q$ with a characteristic $p\ne 2$ is a curve defined by an equation $y^2=xf(x)^2$ where $f(x)\in \F_q[x]$ is an irreducible polynomial. Let $d=\deg(f(x))$. We show that almost any polynomial $h(x)$ with $\deg(h(x))<d$ uniquely represents an element $D$ in the Jacobian of the curve. Then, we define an addition algorithm for this single polynomial representation in the Jacobian group. The representation provides advantageous in practical applications as the implementation results are illustrated at the end of each section. \\
\indent The digital communication security is ensured via cryptographic primitives. The vulnerabilities of these primitives are based on some mathematical problems. One of the most popular and practical public-key cryptosystem Rivest-Shamir-Adleman (RSA) exploits a group structure in the multiplicative group $(\mathbb Z^*_n,\cdot)$ where $n$ is a multiple of two prime integers $p$ and $q$. The public key of a user is just $(n,e)$ where $e$ is a random integer coprime to $\phi(n)=(p-1)(q-1)$.  The security of this algorithm relies on the hardness assumption of finding factors of $n$ or finding an $e^{th}$ root of a random element in $(\mathbb Z_n^*,\cdot)$. Breaking the RSA without factoring the RSA modulus $n$ is called the RSA problem. In the second part of the paper, we present a public-key algorithm whose security is again based on the hardness assumption of integer factorization. The proposed algorithm might be a candidate in case the RSA problem has a solution. The proposed public-key algorithm with nodal curves might emphasize  the use of such curves in practical applications.

\section{Singular  Curves}
\indent  The Jacobian is an abstract term which attaches an abelian group to an algebraic curve. This abstract group, Jacobian, is simply the ideal class group of the corresponding coordinate ring. If the curve is smooth, the attached group is called Jacobian, otherwise it is called  Generalized Jacobian \cite{Rosenlicht}. However, we will keep using the term `Jacobian' for all kinds of curves. We are only interested in computing in the Jacobian groups of nodal curves and more details about algebraic and geometric properties of these curves can be found in \cite{Bosch,LIU}. As we mentioned above, for our purposes, a nodal curve is defined by an equation of the form $N:y^2=xf(x)^2$ over a field $\mathbb F_q$ with a characteristic different from 2 where $f(x)$ is an irreducible polynomial in $\mathbb F_q[x]$. The attached  Jacobian group is denoted by Jac($N$). For example, if the degree of $f(x)$ is 1, that is $N:y^2=x(x+a)^2$ for some $a\ne 0\in \F_q$, computing in the Jacobian group is similar to computing in an elliptic curve group \cite[Section 2.10]{WAS}. In order to perform group operation for a curve, each element in the Jacobian should be represented in a concrete way. The Mumford representation provides a concrete representation for elements in the Jacobians of hyperelliptic curves. This representation has been extended\cite{OzdemirPHD} for singular curves defined by equations of the form $y^2=g(x)$. Below, we present Mumford representation along with Cantor's algorithm which provides a method of computing in the Jacobians for aforementioned singular curves \cite{OzdemirPHD}.
\subsubsection{The Mumford Representation}   \label{mumford}
\indent Let $f(x) \in \mathbb F_q$ be a monic  polynomial of degree $2g+1$  such that $g\ge 1$. A curve $H$ over $\F_q$ is defined by  the equation $y^2=f(x)$.  Any divisor class $D$ in the Jacobian group of $H$, Jac($H$), is represented by a pair of polynomials $[u(x),v(x)]$ satisfying the following:

\begin{enumerate}
\item $\deg(v(x))<\deg(u(x))$.
\item $v(x)^2-f(x)$ is divisible by $u(x)$. 
\item If $u(x)$ and $v(x)$ are both multiples of $(x-a)$ for a singular point $(a,0)$ then
$\dfrac{f(x)-v(x)^2}{u(x)}$ is not a multiple of $(x-a)$. Note that $(a,0)$ is a singular point of $H$ if $a$ is a multiple root of $f(x)$.\\
\end{enumerate}

Any divisor class $D\in$ Jac($H$) is uniquely represented by a (reduced) pair $(u(x),v(x))$ if in addition to the above properties, we have:
\begin{enumerate}
\item $u(x)$ is monic.
\item deg$(v(x))<$deg$(u(x))\leq g.$
\end{enumerate}
 We should note here that the identity element is represented by $[1,0]$.
\subsubsection{Cantor's Algorithm} 
This algorithm takes two divisor classes $D_1=[u_1(x),v_1(x)]$ and $D_2=[u_2(x),v_2(x)]$ on $H: y^2=f(x)$ and outputs the unique representative for the divisor class $D$ such that $D=D_1+D_2$. 
\begin {enumerate}
\item $h= \gcd (u_1,u_2,v_1+v_2)$ with polynomials $h_1, h_2,h_3$ such that\\ $h=h_1u_1+h_2u_2+h_3(v_1+v_2)$\\
\item $u=\dfrac {u_1u_2}{h^2}$ and $v\equiv \dfrac{h_1u_1v_2+h_2u_2v_1+h_3(v_1v_2+f)}{h}$ (mod $u$)\\
\\
{\bf repeat:}
\\
\item $\widetilde{u}=\dfrac{v^2-f}{u}$ and $\widetilde{v}\equiv v$ (mod $\widetilde u$)\\

\item $u=\widetilde u$ and $v=-\widetilde v$
\\
{\bf until} deg $(u)\leq g$ 

\item Multiply $u$ by a constant to make $u$ monic.
\item $D=[u(x),v(x)]$
\end{enumerate}
The combination of the third and fourth steps is called the reduction steps which eventually return a unique reduced divisor for each class. The justification of the above statements is given in the \cite{OzdemirPHD}.\\

\subsection{Nodal Curves}
A nodal curve  $N$ over a field is an algebraic curve with finitely many singular points which are all simple double points. The curve $N$ has a smooth resolution $\widetilde{N}$ obtained by separating the two branches at each node. In this section, we are  going to construct a  representation for elements in the Jacobians of  nodal curves. Again, note that the curves under consideration are of the form $N:y^2=xf(x)^2$ where $f(x)$ is an irreducible polynomial of degree $d$ over the field $\mathbb F_q$. Here, we briefly mention related results for nodal curves especially from the work of M. Rosenlicht \cite{Rosenlicht, Rosenlicht54}. Let 
$C$ be a smooth algebraic curve and $\mathfrak m$ be a modulus, i.e. $\mathfrak m=\sum_{P\in C} m_PP$ where $m_P$ is non-negative. Let denote the generalized Jacobian group of $C$ with respect to the modulus $\mathfrak m$ by $J_{\mathfrak m}(C)$. We have a surjective homomorphism\cite{Rosenlicht54} 
$$\sigma : J_{\mathfrak m}(C)\rightarrow Jac(C).$$  
\begin{remark}\label{Rm1} The normalization of the nodal curve $N$ gives $\mathbb P^1$ so we take $C=\mathbb P^1$. It is known that Jac($C)$ is trivial. In our case i.e., $J_{\mathfrak m}(C)=$Jac($N)$, the kernel of $\sigma$ is isomorphic to a torus $\mathbb G_m^d$ of dimension $d$=deg($f(x)$). Note that, the modulus $\mathfrak m$ has only singular points which are the roots of $f(x)$. See  \cite{Dechene,OzdemirPOL,Serre} for more details. 
\end{remark}	
%Let $N:y^2=xf(x)^2$ be a nodal curve over $\mathbb F_q$. Let $\deg(f(x))=d$. By the work of \cite{OzdemirPHD}, we know that there is a unique reduced pair $(u(x),v(x))$ with $\deg(u(x))\le d$ for each class in Jac($N$). This is due to extension of Mumford Representation for singular curves. Consider a pair $(f^2(x),h(x)f(x))$ where $h(x)$ is a polynomial such that $\deg(h(x))\le \deg(f(x))$ and $\gcd(x-h^2(x),f(x))=1$.
\begin{theorem} \label{theorem1}
Let $f(x)$ be an irreducible polynomial of degree $d$ over $\F_q$ and $N:y^2=xf^2(x)$ be a nodal curve over $\mathbb F_q$. Any divisor class $D\in$ Jac($N$) is uniquely represented by a  polynomial $h(x)$ satisfying $$\deg(h(x))< d \text{ and }  \gcd(f(x),x-h^2(x))=1.$$  
\end{theorem}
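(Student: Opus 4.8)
The plan is to pass to the normalization and read the single polynomial off the branch data at the nodes. On $N$ the function $w=y/f(x)$ satisfies $w^2=x$, and $(x,y)\mapsto w$ identifies $\mathbb P^1$ (coordinate $w$) with the normalization $\widetilde N$ via $x=w^2,\ y=wf(w^2)$. The singular points are $(\alpha_i,0)$ with $f(\alpha_i)=0$, and over each lie the two branch points $w=\pm\beta_i$, where $\beta_i^2=\alpha_i$. By Remark~\ref{Rm1}, $\mathrm{Jac}(N)=\ker\sigma\cong\mathbb G_m^{\,d}$, and a class $D=\mathrm{div}(g)$ is pinned down by the branch ratios $\mu_i=g(\beta_i)/g(-\beta_i)\in\overline{\F}_q^{\times}$. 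I would build $h$ so that $h(\alpha_i)$ records the branch coordinate $w=v/f$ of the reduced Mumford representative at the $i$-th node; concretely $h(\alpha_i)=\beta_i\,(1+\mu_i)/(1-\mu_i)$, the inverse under $w\mapsto (w-\beta_i)/(w+\beta_i)$ of the torus coordinate $\mu_i$.

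First I would check that the two stated conditions fall out of this definition. Writing $\mu_i=\prod_j (t_j-\beta_i)/(t_j+\beta_i)$ for the reduced support $\{(x_j,y_j)\}$ with $t_j=v(x_j)/f(x_j)$, and splitting $A(\beta)=\prod_j(t_j+\beta)$ into its even and odd parts in $\beta$, the factor $\beta_i$ cancels, so $h(\alpha_i)$ is a ratio of polynomials in $\alpha_i$ alone; the resulting $d$ conjugate values are Galois-stable and determine a unique $h\in\F_q[x]$ with $\deg h<d$ (namely $h\bmod f$). The relation $\gcd(f,x-h^2)=1$ unwinds to $h(\alpha_i)^2\neq\alpha_i$, i.e. $h(\alpha_i)\neq\pm\beta_i$, i.e. $\mu_i\notin\{0,\infty\}$; this is automatic for a torus point and says exactly that the divisor is supported off every branch of a node, which is the content of condition (3) of the Mumford representation.

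For the bijection I would run the construction backwards: given an admissible $h$, set $\mu_i=(h(\alpha_i)-\beta_i)/(h(\alpha_i)+\beta_i)$, so that $\gcd(f,x-h^2)=1$ guarantees $\mu_i\in\overline{\F}_q^{\times}$, and a short Frobenius computation (using $\beta_i^{\,q}=\pm\beta_{i+1}$) shows the tuple $(\mu_i)$ is Frobenius-stable and hence a genuine $\F_q$-point of the torus, i.e. a class $D$. Injectivity and surjectivity then come from composing the isomorphism $\mathrm{Jac}(N)\cong\mathbb G_m^{\,d}$ with the injective Möbius map, and uniqueness of $h$ is inherited from uniqueness of the reduced Mumford pair. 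A counting check supports this: the admissible $h$ number $q^d-2$ or $q^d$ according as $\alpha_i$ is or is not a square in $\F_{q^d}$, to be compared with $|\mathrm{Jac}(N)(\F_q)|=q^d-1$ or $q^d+1$ in the split and non-split cases.

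The hard part, visible already in that count, is the identity class. Since the $\alpha_i$ are conjugate, $\mu_1=1$ forces every $\mu_i=1$, i.e. $D=0$, and then $h(\alpha_i)=\beta_i(1+1)/(1-1)=\infty$: the identity is the one class whose branch coordinates run off to infinity and so has no finite polynomial representative. Thus I expect the natural correspondence to be a bijection between admissible polynomials and the non-identity classes, with the identity appended by convention (its Mumford form being $[1,0]$); making the statement literally exhaustive is the point that needs the most care. The remaining delicate step is the descent — verifying that the $\beta_i$ really cancel so that $h$ lands in $\F_q[x]$ and that the reconstructed $(\mu_i)$ is Frobenius-stable — which is governed by the split/non-split behaviour of the nodes over $\F_{q^d}$ and is precisely what separates the two cases in the count above.
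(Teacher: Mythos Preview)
Your approach is genuinely different from the paper's and, in one respect, more careful. The paper never touches the normalization or the torus coordinates explicitly. Instead it argues entirely inside the Mumford--Cantor formalism: Lemma~\ref{Lm1} shows that each admissible $h$ gives rise to a pair $[f^2(x),h(x)f(x)]$ satisfying the singular Mumford conditions; Lemma~\ref{Lm2} runs Cantor's algorithm on two such pairs and observes that the output is again of this form; Lemma~\ref{Lm3} (the third lemma) shows that distinct $h$'s yield distinct classes. The proof then closes with a pure counting step: there are ``approximately $q^d$'' admissible $h$'s, and $|\mathrm{Jac}(N)|$ is also ``approximately $q^d$'' by Remark~\ref{Rm1}, so the map must be onto.

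What each approach buys: the paper's route is elementary and self-contained --- it needs nothing beyond the extended Mumford representation and Cantor's algorithm already set up in the section, and it produces the addition formula (Algorithm~\ref{alg1}) as a by-product of Lemma~\ref{Lm2}. Your route is more structural: by writing the class in torus coordinates $\mu_i$ and pulling back along the M\"obius map $w\mapsto (w-\beta_i)/(w+\beta_i)$, you explain \emph{why} a single polynomial suffices and why the condition $\gcd(f,x-h^2)=1$ appears (it is exactly $\mu_i\notin\{0,\infty\}$). The price is the descent step and the Frobenius-stability check, which you correctly flag as the delicate points.

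You have also put your finger on something the paper's ``approximately'' hides. Your exact count --- $q^d-2$ or $q^d$ admissible $h$'s versus $|\mathrm{Jac}(N)(\F_q)|=q^d\mp 1$ --- shows the correspondence misses exactly one class, and your analysis that $\mu_i=1$ forces $h(\alpha_i)=\infty$ identifies it as the identity. This is consistent with the paper's own Algorithm~\ref{alg1}, which treats $[1,0]$ separately, but the theorem statement and the counting paragraph do not make the exception explicit. So your reading of the statement as ``bijection with non-identity classes, identity appended by convention'' is the accurate one, and the paper's approximate count should really be sharpened to the subgroup argument (the represented classes together with $[1,0]$ are closed under addition by Lemma~\ref{Lm2}, hence form a subgroup of the exact right order).
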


We are going to prove this theorem by a series of lemma.

\begin{lemma} \label{Lm1}
Let $N$ be as above. Let $h(x)$ be a polynomial of degree less than $d$ such that $\gcd(f(x),x-h^2(x))=1$. Then, the pair $D=[f^2(x),h(x)f(x)]$ represents an element in Jac($D$).
\end{lemma}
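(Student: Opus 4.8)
The plan is to verify directly that the pair $[u(x),v(x)] = [f^2(x),\,h(x)f(x)]$ satisfies the three defining conditions of the Mumford representation from Subsection~\ref{mumford}, applied to the curve $y^2 = g(x)$ with $g(x) = x f^2(x)$. Once all three conditions hold, the pair is a legitimate Mumford representative and therefore represents a class in $\mathrm{Jac}(N)$ (it need not be reduced, since $\deg u = 2d$ may exceed the genus; reduction is deferred to the subsequent lemmas).

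First I would dispose of the two easy conditions. For the degree condition, $\deg(h) < d$ and $\deg(f) = d$ give $\deg(v) = \deg(h) + d < 2d = \deg(u)$, so condition~(1) is immediate. For the divisibility condition~(2), the one-line computation
\[
 v^2 - g = h^2 f^2 - x f^2 = f^2\,(h^2 - x)
\]
shows that $u = f^2$ divides $v^2 - g$, with quotient $h^2 - x$, and this imposes no restriction on $h$ so far.

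The substantive step, which I expect to be the main obstacle, is the singularity condition~(3). First I must identify the singular points: they are the points $(a,0)$ at which $a$ is a multiple root of $g(x) = x f^2(x)$. Since $f$ is irreducible (and $f(0)\neq 0$, as required for $N$ to be genuinely nodal rather than cuspidal), the multiple roots of $g$ are exactly the roots of $f$, each occurring to order two. For any such $a$ both $u = f^2$ and $v = hf$ vanish at $a$, so condition~(3) is indeed triggered at every singular point, and I must check that
\[
 \frac{g - v^2}{u} = x - h^2(x)
\]
is not divisible by $(x-a)$.

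The crux is to convert this collection of pointwise non-vanishing conditions into the single hypothesis $\gcd(f(x),\,x - h^2(x)) = 1$. Because $h$ has coefficients in $\F_q$ and the roots of the irreducible $f$ form one Galois orbit, $x - h^2(x)$ either vanishes at every root of $f$ or at none; equivalently, either $f \mid (x - h^2)$ or $\gcd(f,\,x - h^2) = 1$. Hence $x - h^2$ fails to vanish at all singular points precisely when $\gcd(f,\,x-h^2)=1$, which is exactly the stated hypothesis. This settles condition~(3) and completes the verification that $[f^2,\,hf]$ represents an element of $\mathrm{Jac}(N)$.
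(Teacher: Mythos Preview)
Your proof is correct and follows essentially the same route as the paper: verify the three Mumford conditions for the pair $[f^2,hf]$, with the only nontrivial check being condition~(3), where the quotient $(g-v^2)/u = x-h^2$ must avoid each root of $f$, which is exactly what $\gcd(f,x-h^2)=1$ guarantees. Your write-up is in fact more explicit than the paper's (you spell out conditions~(1) and~(2), and your Galois-orbit remark explains why the gcd condition is all-or-nothing, though that observation is not strictly needed for the verification).
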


\begin{proof}
Let $$D=[u(x),v(x)]=[f^2(x),h(x)f(x)].$$ Both $u(x)$ and $v(x)$ are divisible by $x-a$ where $a$ is any root of $f(x)$ over the algebraic closure of $\mathbb F_q$. On the other hand, $\gcd(f(x),x-h^2(x))=1$ so $$\dfrac{xf^2(x)-v^2(x)}{u(x)}=\dfrac{xf^2(x)-h^2(x)f^2(x)}{f^2(x)}=x-h^2(x)$$ is not divisible by $x-a$ for any root $a$ of $f(x)$. By Mumford Representation which is defined above, $[f^2(x),h(x)f(x)]$ represents an element $D$ in Jac($N$).
\end{proof}

\begin{lemma}\label{Lm2}
Let $\gcd(f(x),x-h_i^2(x))=1$ and $\deg(h_i(x))<d$ for each $i=1,2$. Let $D_1=[f^2(x),h_1(x)f(x)]$ and $D_2=[f(x)^2,h_2(x)f(x)]$ be two divisor classes.
We find $$D_1+D_2=D_3=[f^2(x),h_3(x)f(x)]$$ via
\begin{enumerate}
\item finding two polynomials $g_1(x),g_2(x)$ such that $$g_1(x)f(x)+g_2(x)(h_1(x)+h_2(x))=1\\$$
\item Then computing $$h_3(x)\equiv (f(x)h_1(x)g_1(x)+g_2(x)(h_1(x)h_2(x)+x)) \mod f(x)$$ with $\deg(h_3(x))<d$.\\
\end{enumerate} 
\end{lemma}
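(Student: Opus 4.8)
The plan is to obtain $D_3$ by running the extended Cantor algorithm of Section 1 on the two inputs $D_1=[f^2,h_1f]$ and $D_2=[f^2,h_2f]$, where the curve polynomial is $F(x)=xf(x)^2$, and to check that its output is exactly $[f^2,h_3f]$ with $h_3$ as stated, so that the two explicit steps of the lemma are a streamlined form of the composition step of Cantor's algorithm. First I would dispose of the composition (``$\gcd$'') step, and then argue that the reduction loop does nothing here.

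For the composition step I would compute $h=\gcd(u_1,u_2,v_1+v_2)=\gcd\bigl(f^2,f^2,(h_1+h_2)f\bigr)$. Since $\deg(h_1+h_2)<d=\deg f$ and $f$ is irreducible, either $h_1+h_2=0$ or $\gcd(f,h_1+h_2)=1$; assuming the generic case $h_1+h_2\neq 0$ this gives $h=f$. Multiplying the B\'ezout identity of step (1), namely $g_1f+g_2(h_1+h_2)=1$, by $f$ produces $g_1f^2+g_2(h_1+h_2)f=f$, which is precisely a representation $h=c_1u_1+c_2u_2+c_3(v_1+v_2)$ with $c_1=g_1$, $c_2=0$, $c_3=g_2$. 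Then $u=u_1u_2/h^2=f^2$, and substituting the $c_i$ into the Cantor formula for $v$ gives numerator $f^2\bigl(g_1h_2f+g_2(h_1h_2+x)\bigr)$; dividing by $h=f$ yields $v\equiv f\bigl(g_1h_2f+g_2(h_1h_2+x)\bigr)\pmod{f^2}$. Writing $v=h_3f$ and reducing the bracket modulo $f$ gives $h_3\equiv g_2(h_1h_2+x)\pmod f$ with $\deg h_3<d$, in agreement with step (2): the extra summand $fh_1g_1$ appearing there is a multiple of $f$ and hence invisible after reduction modulo $f$.

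It remains to see that the reduction loop returns this pair unchanged, i.e. that $[f^2,h_3f]$ is already the canonical representative. By Lemma \ref{Lm1} this holds exactly when $\gcd(f,x-h_3^2)=1$, so I would verify this condition (equivalently, that the output again lies in the image of the representation of Theorem \ref{theorem1}); it should follow from the fact that the group law on $\mathrm{Jac}(N)$ is well defined, so that the sum of two classes of the prescribed shape is again of that shape. The subtle point, and what I expect to be the main obstacle, is to explain why the reduction step is \emph{not} applied even though $\widetilde u=(v^2-F)/u=h_3^2-x$ has degree strictly smaller than $\deg u=2d$: passing from $[f^2,h_3f]$ to a pair with first coordinate $\widetilde u=h_3^2-x$ moves the support off the nodes, and such a move does not preserve the class in the generalized Jacobian $J_{\mathfrak m}(\mathbb P^1)=\mathrm{Jac}(N)$, whose modulus $\mathfrak m$ is supported precisely at the nodes (Remark \ref{Rm1}). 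Hence the $[f^2,hf]$ form is forced, the reduction loop is vacuous, and the output is $D_3=[f^2,h_3f]$. Finally I would record that the degenerate case $h_1+h_2=0$, where step (1) admits no solution, corresponds to $D_2=-D_1$ and must be handled as the identity separately.
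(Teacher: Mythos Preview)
Your approach is essentially the paper's: both arguments run the composition step of Cantor's algorithm on $[f^2,h_1f]$ and $[f^2,h_2f]$ and read off $h_3$. The only cosmetic difference is the choice of B\'ezout coefficients in the representation $c_1u_1+c_2u_2+c_3(v_1+v_2)=h$: you take $(c_1,c_2,c_3)=(g_1,0,g_2)$, which yields $g_1h_2f$ in the numerator, whereas the paper takes $(0,g_1,g_2)$ and gets $g_1h_1f$; since both terms are multiples of $f$ they vanish after reduction modulo $f$, and the resulting $h_3$ is the same either way.

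Your extended discussion of the reduction loop and of the closure condition $\gcd(f,x-h_3^2)=1$ goes beyond what the paper does. The paper simply stops after the composition step, implicitly treating the pairs $[f^2,hf]$ as the working representation (as set up in Lemma~\ref{Lm1}) rather than ever invoking the reduction steps of Cantor's algorithm; it neither runs nor blocks the reduction loop, and it does not separately verify that $h_3$ again satisfies $\gcd(f,x-h_3^2)=1$. So your concern about reduction is thoughtful but, for the purposes of matching the paper's proof, unnecessary here.
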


\begin{proof}
We apply Cantor's Algorithm for $D_1+D_2$ to confirm the addition algorithm. 
\begin{enumerate}
\item We first compute:\\

$\begin{array}{lll}
       \gcd(f(x)^2,f(x)^2,h_1(x)f(x)+h_2(x)f(x))&=& f(x)\cdot \gcd(f(x),f(x),h_1(x)+h_2(x))\\
       \\
       &=& f(x)\cdot \gcd(f(x),h_1(x)+h_2(x))\\
       \\
       &=&f(x)
      \end{array}$
 
with $g_1(x),g_2(x)$ such that $g_1(x)f(x)+g_2(x)\Big(h_1(x)+h_2(x)\Big)=1.$  \\

\item Set\\
$\begin{array}{lllll}
u_3(x)&=&\dfrac{f(x)^2f(x)^2}{f(x)^2}=f(x)^2 \\
\\
v_0(x)&=&\dfrac{g_1(x)f^3(x)h_1(x)+g_2(x)\Big(h_1(x)h_2(x)f^2(x)+xf(x)^2\Big)}{f(x)}\\
\\
&&=g_1(x)f^2(x)h_1(x)+g_2(x)\Big(h_1(x)h_2(x)f(x)+xf(x)\Big).\\
\\
\end{array}$
\item Then\\
$\begin{array}{lllll}
v_3(x)&\equiv& v_0(x) \mod u_1(x)\\
\\
&\equiv& g_1(x)f(x)^2h_1(x)+g_2(x)\Big(h_1(x)h_2(x)f(x)+xf(x)\Big)\mod u_3(x)=f(x)^2\\
\\
&=&\underbrace{\Big(f(x)g_1(x)h_1(x)+g_2(x)(h_1(x)h_2(x)+x) \mod f(x)\Big)}_{h_3(x)}f(x)\\
\\
&=& h_3(x)f(x) \text{ with} \deg(h_3(x))<d\\
\\
\end{array}
$
\item $D_1+D_2=[u_3(x),v_3(x)]=[f^2(x),h_3(x)f(x)]=D_3$\\
\end{enumerate}

\end{proof}

Note that $$[f^2(x), h(x)f(x)]+[f^2(x),-h(x)f(x)]=[1,0].$$

\begin{lemma}
Let $N:y^2=xf^2(x)$ be a nodal curve over $\mathbb F_q$ such that $f(x)$ is an irreducible polynomial. Let $$\begin{array}{ccc}
D_1 & = & [f^2(x),\quad h_1(x)f(x)] \text{ with } \deg (h_1(x))<\deg(f(x))\\
D_2 &= & [f^2(x),\quad h_2(x)f(x)] \text{ with } \deg(h_2(x))<\deg(f(x))
\end{array}$$ 
such that $$h_1(x)\ne h_2(x).$$
Then $$D_1\neq D_2.$$ 
\end{lemma}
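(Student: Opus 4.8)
The plan is to prove the contrapositive: assuming $D_1 = D_2$, I will deduce $h_1 = h_2$. The engine is the difference $D_1 + (-D_2)$, which must be the identity $[1,0]$ if the two classes coincide. By the inverse rule recorded just before the lemma, $-D_2 = [f^2(x), -h_2(x)f(x)]$, so I would apply the addition algorithm of Lemma \ref{Lm2} to $D_1$ and $-D_2$, that is, with $h_2$ replaced by $-h_2$. First I would check the hypothesis of that algorithm: step (1) of Lemma \ref{Lm2} needs $\gcd(f(x), h_1(x)-h_2(x))=1$, and since $h_1 \neq h_2$ with $\deg(h_1-h_2)<d=\deg f$ while $f$ is irreducible, $f$ cannot divide the nonzero polynomial $h_1-h_2$, so the gcd is $1$. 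Running the formula then yields $D_1+(-D_2)=[f^2(x),h_3(x)f(x)]$ with $h_3 \equiv (x-h_1h_2)(h_1-h_2)^{-1}\pmod{f}$ and $\deg(h_3)<d$. The salient feature is that, as long as the coprimality holds, the output again has $u$-part exactly $f^2(x)$, so it is never $[1,0]$; a trivial output could only appear when the coprimality fails, i.e.\ when $h_1 \equiv h_2$.

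The crux, and the step I expect to be the main obstacle, is to justify rigorously that $[f^2,h_3f]\neq[1,0]$, i.e.\ that a pair with $u$-part $f^2$ can never be the identity class. This is a genuinely \emph{generalized}-Jacobian statement: on the smooth model one could reduce $[f^2,hf]$ further, since its Cantor reduction $\widetilde u=(hf)^2-xf^2)/f^2$ equals $h^2-x$ of lower degree, so one must use that such a reduction moves support across the nodes and is therefore blocked in $\mathrm{Jac}(N)$ by condition (3) of the Mumford representation, leaving $[f^2,hf]$ as the canonical representative of its class. I would not want to rest the argument only on an appeal to reducedness, so I would close this point conceptually via Remark \ref{Rm1}.

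Concretely, writing $\alpha = x \bmod f$ so that $\F_q[x]/(f)\cong\F_{q^d}$, I would show that the assignment $h \mapsto [f^2,hf]$ corresponds, under the isomorphism $\ker\sigma \cong \mathbb G_m^d$, to the M\"obius map $h(\alpha)\mapsto \lambda(h):=\dfrac{h(\alpha)+\sqrt{\alpha}}{h(\alpha)-\sqrt{\alpha}}$. Indeed, reading off the induced group law from Lemma \ref{Lm2}, namely $h_1 * h_2 \equiv (x+h_1h_2)(h_1+h_2)^{-1}\pmod f$, a direct check gives $\lambda(h_1*h_2)=\lambda(h_1)\lambda(h_2)$, so $\lambda$ carries $*$ to ordinary multiplication on $\mathbb G_m$, with the identity $\lambda=1$ corresponding to $h=\infty$ (represented by $[1,0]$) and the constraint $\gcd(f,x-h^2)=1$ corresponding exactly to $\lambda \neq 0,\infty$. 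Since a polynomial $h$ of degree $<d$ is determined by its value $h(\alpha)\in\F_{q^d}$, and the M\"obius transformation $\beta \mapsto (\beta+\sqrt\alpha)/(\beta-\sqrt\alpha)$ is injective, distinct $h_1,h_2$ produce distinct elements of $\mathbb G_m^d$, whence $D_1 \neq D_2$. In the write-up I would present the addition-formula computation as the main line and invoke this torus identification solely to rule out the identity, which is the only non-formal point.
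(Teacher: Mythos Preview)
Your argument follows the paper's exactly at the structural level: assume $D_1=D_2$, form $D_1+(-D_2)$, note that when $h_1\neq h_2$ the coprimality $\gcd(f,h_1-h_2)=1$ holds (irreducibility of $f$ plus the degree bound), so Lemma~\ref{Lm2} applies and the output has $u$-part $f^2$, not $1$. The paper's proof is literally this, in three sentences: it asserts that the sum can equal $[1,0]$ only when $h_1-h_2$ is zero or a multiple of $f$, rules out the latter by degree, and stops.

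Where you go beyond the paper is in isolating the one non-formal step---why a pair with $u$-part $f^2$ cannot represent the identity class, given that $[f^2,hf]$ is \emph{not} reduced in the $\deg u\le g$ sense---and proposing to settle it via the explicit M\"obius map $\lambda(h)=(h(\alpha)+\sqrt\alpha)/(h(\alpha)-\sqrt\alpha)$. Your verification that $\lambda(h_1*h_2)=\lambda(h_1)\lambda(h_2)$ for the law $h_1*h_2\equiv(x+h_1h_2)(h_1+h_2)^{-1}\pmod f$ is correct and makes the torus structure fully explicit, which the paper never does. One caution: the homomorphism property of $\lambda$ alone does not yet yield $D_{h_1}\neq D_{h_2}$; you still need that the map $h\mapsto D_h$ factors through $\lambda$, i.e.\ that your $\lambda$ actually realizes the Rosenlicht isomorphism of Remark~\ref{Rm1} (equivalently, that $h\mapsto D_h$ is surjective, so that a homomorphism between finite groups of the same order $q^d\pm1$ is forced to be bijective). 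The paper never confronts this point---it leans implicitly on the singular Mumford/Cantor uniqueness imported from the cited thesis---so your version is already more careful than the original; just make that final identification explicit rather than leaving it as an appeal to Remark~\ref{Rm1}.
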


\begin{proof}
Suppose $$D_1=D_2$$ then 
$$\begin{array}{ccl}
[1,0]&=& D_1+(-D_2)\\
&=& [f^2(x),h_1(x)f(x)]+[f^2(x),-h_2(x)f(x)]\\
\end{array}$$
This is possible only when $h_1(x)+(-h_2(x))$ is zero or a multiple of $f(x)$. Note that it can not be a multiple of $f(x)$ as the degrees of both $h_1(x)$ and $h_2(x)$ are less than $\deg(f(x))$. Therefore, as long as  $h_1(x)\ne h_2(x)$, we do not get $D_1=D_2$.
\end{proof}

\noindent{\it Proof of Theorem \ref{theorem1}}:\\
In Lemma \ref{Lm1}, we defined a new type of a representation for  elements in the Jacobian group of $N:y^2=xf^2(x)$, i.e., each element is represented by a pair $[f^2(x),h(x)f(x)]$ such that $\deg(h(x))<\deg(f(x))$ and $f(x)$ doesn't divide $x-h^2(x)$. The lemma \ref{Lm2} shows how to perform the group operation with this representation. In the last lemma, we showed that for distinct $h(x)$, the pairs represent distinct elements in the Jacobian group. As the degree of $h(x)$ is less than $d$, we have approximately $q^{\deg(f(x))}$ such pairs which is equal to the order of the Jacobian group by the remark \ref{Rm1} and this completes the proof.  \\
\indent  Let $\mathbb F_q$ be a finite field with a characteristic $p\ne2$. Let $N:y^2=xf^2(x)$ be a singular curve such that $f(x)$ is an irreducible polynomial of degree $d$ over $\mathbb F_q$. The above discussion leads us to the following Algorithm \ref{alg1}.
 
\begin{algorithm}[h]                  
	\caption{Addition algorithm in the Jacobian group of the curve $N:y^2=xf^2(x)$ over $\mathbb F_q$. }           
	\label{alg1}                     
	\begin{algorithmic}[1]   
		\renewcommand{\algorithmicrequire}{\textbf{Input:}}
		\renewcommand{\algorithmicensure}{\textbf{Output:}}
		\REQUIRE $D_1$ and $D_2$ represented by $h_1(x), h_2(x)$ respectively such that $\deg h_1(x),h_2(x) <\deg f(x)=d$
		\ENSURE $D= D_1 + D_2=h(x)$
		\STATE If $h_1(x)+h_2(x)\equiv 0 \mod f(x)$ set $h(x)=[1,0]$ (identity). Otherwise do:
		\STATE Find $g_1(x)$ and $g_2(x)$ such that
		$g_1(x)f(x)+g_2(x)(h_1(x)+h_2(x))=1$.
		\STATE Set: $$h(x)\equiv (g_2(x)(h_1(x)h_2(x)+x)) \mod f(x)$$
		
		\RETURN $h(x)$\\
	\end{algorithmic}
\end{algorithm}

\begin{remark}
	We should note here that the work \cite{kohel} discusses computing in the generalized Jacobian group of nodal curves. Imitating singular cubics, the work assumes all pairs satisfying Mumford representation for smooth curve, also represents a point in the Jacobian of nodal curves. However as described in \cite{ozdemir2021factoring} extension of Mumford representation of singular curves requires additional conditions for the pairs to  represent a point in the Jacobian group of the nodal curves as described in the subsection \ref{mumford} above.
\end{remark}

 We form the curve $N$ with an irreducible polynomial $f(x)$ of degree $d$ over $\mathbb F_q$. Any polynomial $h(x)$ of degree less than $d$ with $\gcd(f(x),x-h^2(x))=1$ represents a unique element in 
Jac($N$). For two elements $D_1,D_2\in $ Jac($N$) represented by polynomials $h_1(x)$ and $h_2(x)$ respectively, we define an addition operation involving only univariate polynomial arithmetics. The algorithm returns a polynomial $h(x)\in\mathbb F_q[x]$ which uniquely represents $D=D_1+D_2$. We also note that almost all polynomials $h(x)$ of degree less than $d$ represents an element in Jac($N$) and this allows one to easily select  a random element $D$ in Jac($N$). The single polynomial representation does not only give us the liberty to select any polynomial, but it also provides an efficient group operation in the Jacobian group. The following table compares this group operation with Cantor's algorithm. According to the results in this table, the single representation of Jacobian elements has advantages over polynomial pairs representation. The time is measured while computing a $pQ$ where $Q$ is an element in the Jacobian group of $N$ and 512 bits and 1024 bits prime numbers are used for $p$. The curve is over the field $\mathbb F_p$ and the arithmetic genus of curves are the same for each comparison. The prime number p was kept fixed for each bit size.
\\

%\begin{table}[H]
%	\centering
%	\caption{Comparison of presented group operation and cantor algorithm in terms of executing time.}

%\begin{tabular}{ |p{4cm}|p{3cm}|p{3.5cm}|  }

%	\hline
%	{\bf Degree of the $g(x)$ when $N:y^2=g(x)$}& {\bf Nodal Curves (Second)} &{\bf Cantor's Algorithm (Second)}\\
%	\hline
%	5   & 0.003    &0.023\\
%	\hline
%	11&   0.01  & 0.081  \\
%	\hline
%	23 &0.019 & 0.357\\
%	\hline
%	47    &0.06 & 2.15\\
%	\hline
%	53&   0.068 & 2.98\\
%	\hline
%	63& 0.089 & 4.96   \\
%	\hline
%	71& 0.12  & 6.95\\
%	\hline
%	83   & 0.15    &10.98\\
%	\hline
%	95&   0.19  & 16.67  \\
%	\hline
%	110 &0.24 & 26.36\\
%	\hline
%	130   &0.33 & 45.46\\
%	\hline
%	145&   0.41  & 64.9\\
%	\hline
%	150& 0.43  & 72.3   \\
%	\hline
%	165& 0.52  & 100.39\\
%	\hline
%	193 & 0.69 &167.29\\
%	\hline
	
%\end{tabular}
%\end{table}

\begin{table}[H]
	
	\caption{Comparison of presented group operation and cantor algorithm in terms of executing time. }
	
	\begin{tabular}{ |p{1.5cm}|p{3.4cm}|p{2.7cm}|p{3.5cm}|  }
		
	%	\begin{tabular}{ |c|c|c|c|  }

		%\multicolumn{3}{|c|}{} \\
		\hline
		{\bf Size of $p$ (Bits)}&{\bf The degree of $g(x)$ where a curve is defined by $y^2=g(x)$}& {\bf Nodal Curves (Second)} &{\bf Cantor's Algorithm (Second)}\\
		\hline
		\multirow{15}{1.5cm}{\centering  512}  & \centering  11 & \centering  0.065    &   	0.685 \\  \cline{2-4}

		& \centering 23 &\centering   0.14 &   2.54 \\  \cline{2-4}
		
		& \centering 47& \centering 0.37  &   12.2  \\ \cline{2-4}
		
		& \centering 95 &\centering   1.12 &  37.1 \\ \cline{2-4}
		
		& \centering 107& \centering  1.4  &  45.85  \\ \cline{2-4}
		
		& \centering 127 &\centering   1.83 & 60.72 \\ \cline{2-4}
		
		& \centering 143& \centering  2.32  & 80.79 \\ \cline{2-4}
		
		& \centering 167 &\centering   2.97 & 112.83  \\ \cline{2-4}
		
		& \centering 191& \centering  4.09  & 148.72 \\ \cline{2-4}
		
		& \centering 221&\centering  5.19 & 188.72  \\ \cline{2-4}
		
		& \centering 261& \centering  7.22  & 257.06 \\ \cline{2-4}
		
		& \centering 291 &\centering   8.88 & 329.76 \\ \cline{2-4}
		
		& \centering 301& \centering  9.32  & 356.96 \\ \cline{2-4}
		
		& \centering 331 &\centering   11.1 & 425.18  \\ \cline{2-4}
		
		&\centering  387 &\centering   14.4 & 616.87 \\ 
		\hline
		\multirow{15}{1.5cm}{ \centering  1024}  & \centering  11 & \centering 0.23     & 2.44 \\ \cline{2-4}
		
			& \centering 23 &\centering   0.59 & 11.01 \\ \cline{2-4}
		
		& \centering 47& \centering  1.78  & 59.75 \\ \cline{2-4}
		
		& \centering 95 &\centering   6.07 & 111.7  \\ \cline{2-4}
		
		& \centering 107& \centering 7.59  & 195.43 \\ \cline{2-4}
		
		& \centering 127 &\centering   9.98 & 293.71 \\ \cline{2-4}
		
		& \centering 143& \centering 12.62  & 382.44 \\ \cline{2-4}
		
		& \centering 167 &\centering   16.75 &  529.04 \\ \cline{2-4}
		
		& \centering 191& \centering  23.01  & 689.51 \\ \cline{2-4}
		
		& \centering 221&\centering   29.68 & 884.34 \\ \cline{2-4}
		
		& \centering 261& \centering  42.47  & 1206.79 \\ \cline{2-4}
		
		& \centering 291 &\centering   51.28 & 1546.82 \\ \cline{2-4}
		
		& \centering 301& \centering  54.53  & 1654.85 \\ \cline{2-4}
		
		& \centering 331 &\centering   63.75 &  2008.26 \\ \cline{2-4}
		
		&\centering  387 &\centering   85.69 & 2738.57 \\ 
		
		\hline
		
	\end{tabular}

\end{table}

The tests were run on a Windows 10 OS computer with 16 GB RAM and a Intel Core i7- 10875H 2.3 GHz processor. We use the programming environment of Python with a SageMath library \cite{Sage}. 

\section{A Public-Key Algorithm}

Public-key infrastructure (PKI) is a composition of services and protocols that provides key generation and management for  public-key algorithms which are part of asymmetric encryption methods that are employed for reliable communication. In a public-key cryptography, each user has 2  keys; one is for encryption and the other one is for decryption of  messages. The encryption key is public and broadcast but the other one, the decryption key, must be kept private.
As shown in Figure \ref{fig:fig1}, the exchange of data between two users basically occurs in the following way:

\begin{itemize}
	\item The sender and recipient have their key pairs. The sender uses the recipient's public key for encryption and sends the ciphertext to the recipient.
	\item The recipient gets the ciphertext and decrypts the messages using its own private key.      
\end{itemize}

\begin{figure}[H]
	\includegraphics[width=\linewidth]{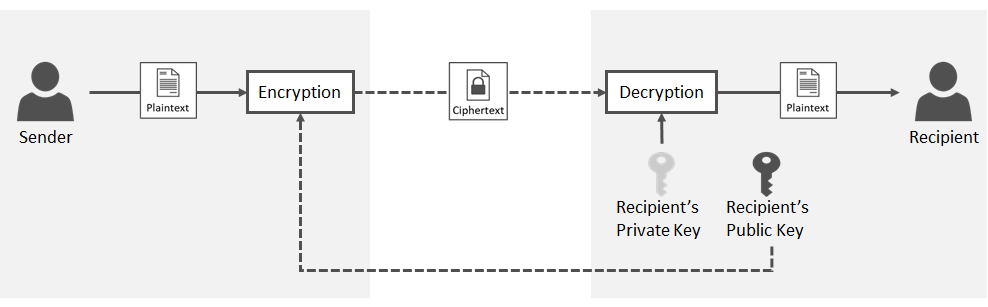}
	\caption{Public-Key Algorithm: Different keys are used for encryption and decryption.}
	\label{fig:fig1}
\end{figure}

One of the most widely used public-key encryption methods is Rivest-Shamir-Adleman (RSA) algorithm. The method is being employed in many areas like authentication and digital signature to ensure security in the information exchange systems \cite{RSA}.  Each party has a public key pair $(n,e)$ where $n$ is formed by the multiplication of prime numbers of the same size. Each of these prime integers is recommended to be at least 1024-bit number \cite{NIST}. The workflow starts with selecting a pair of primes $p,q$ satisfying certain conditions \cite{Boneh}. Once deciding such numbers then the first component of the public key is assigned to be $$n=pq$$
In the second step, an encryption key $e$ which is co-prime to $(p-1)(q-1)$ is chosen. As a result, the public key is just the pair $(n,e)$. The private decryption key is an integer $d$ such that $$de\equiv 1 \mod (p-1)(q-1)$$ Key generation steps of RSA algorithm is given in Algorithm \ref{alg:rsa_key} .

\begin{algorithm}[!h]
	\caption{RSA Key Generation}\label{alg:rsa_key}
	\begin{algorithmic}[1]
		\renewcommand{\algorithmicrequire}{\textbf{Input:}}
		\renewcommand{\algorithmicensure}{\textbf{Output:}}
		\REQUIRE $p$ and $q$ are the same size prime integers.
		\ENSURE Public Key $\gets (n,e)$, Private Key $\gets (n,d)$
		\STATE $n \gets pq$
		\STATE $\phi(n) \gets (p-1)(q-1)$
		\STATE Select $e; \qquad   \gcd(e,\phi(n),e)=1;    1< e < \phi(n) $
		\STATE Compute $d; \qquad ed=1 \mod \phi(n)$ \\
		Public Key $\gets (n,e)$\\
		Private Key $\gets (n,d)$

	\end{algorithmic}
	
\end{algorithm}

Note that the multiplicative group $G=(\mathbb Z^*_n,\cdot)$ where $\mathbb Z^*_n$ represents the numbers in between 0 and $n-1$ which are co-prime to $n$. The group $G$ has order $$\phi(n)=(p-1)(q-1)$$ and this fact allows to say $$a^{k\phi(n)+1}\equiv a \mod n$$ for any integer $k$ and any element $a$ of $G$. In RSA algorithm, any message is represented by an element $m$ of $G$. The message $m$ is encrypted by performing the following operation 
$$c\equiv m^e \mod n $$ where $c$ stands for the corresponding ciphertext. Encryption and decryption algorithms for RSA is given in Algorithm \ref{alg:rsa} .

\begin{algorithm}[!h]
	\caption{RSA Encryption and Decryption}\label{alg:rsa}
	\begin{algorithmic}[1]
		\renewcommand{\algorithmicrequire}{\textbf{Encryption:}}
		\renewcommand{\algorithmicensure}{\textbf{Decryption:}}
		\REQUIRE $E(m): $ \\
		$\quad\quad\quad\quad\qquad  c=m^e \mod n $ \\
		$\quad\quad\quad\quad\qquad $The cipher $\gets c$
		
		\ENSURE  $D(c): $ \\
		$\quad\quad\quad\quad\qquad m=c^d \mod n $ \\
		$\quad\quad\quad\quad\qquad $The message  $\gets m$

	\end{algorithmic}
	
\end{algorithm}

 If one knows the factors $p,q$ of $n$ then one can compute $\phi(n)$ and $d$ such that $de\equiv 1\mod \phi(n)$. In other words, the cipher $c$ can be converted to $m$ by anyone having the factors $p$ and $q$, that is, factoring the RSA moduli is sufficient to break the algorithm. On the other hand, one might return back to the plaintext from publicly known information $n, e, c$ without factoring $n$.  
 
 \begin{definition}
 The problem of  reaching $m$ without finding the factors of the RSA modulus is called {\it the RSA problem.} 
\end{definition}

Even though most of the research toward the security of RSA focuses on the factorization problem, the recent work has shown that converting $c$ to $m$ without factorization is more promising method \cite{Naccache}. The following algorithm might stand as a nice choice in case a practical method for the RSA problem is presented.

%In this case, the main concern is factoring such large numbers. Hence,  the security of the RSA cryptosystem is due to the fact that these large numbers cannot be factored. 

\section{The Method}

Let $N: y^2=xf^2(x)$ be a nodal curve defined over a ring $(\mathbb Z_n,+,\cdot)$. Note that, the computing in the Jacobian is discussed above while defining curves over  finite fields. Although we now define the curve $N$ over a ring, we are just going to use group operation in the Jacobian and as in the case of hyperelliptic curves, computing in the Jacobian over the ring $\mathbb Z_n$ is handled in the same manner \cite{lenstra1993}. In other words, for our purposes defining the curve modulo prime or composite will almost be the same. The integer $n$ as in the case of RSA algorithm is a composite number that is formed by multiplying  two prime numbers $p$ and $q$. The polynomial $f(x)$ is an irreducible polynomial over $\mathbb F_p$  and $\mathbb F_q$. It is not a costly task to find such a polynomial $f(x)$ of any degree. Once $f(x)$ is stated, the rest of the operations will take place in the ring $\mathbb Z_n$. The Jacobian group, Jac($N_n$) of $N$ over the ring $\mathbb Z_n$ is isomorphic to $$\text{Jac}(N_p)\oplus \text{Jac}(N_q).$$ Let assume that  $\deg(f(x))=r$ where $r$ is a positive integer. Then, the order of Jac($N_p)$, $ord_{N_p}$, is either $p^r+1$ or $p^r-1$. In a similar manner, Jac($N_q$),  $ord_{N_q}$, has order $q^r+1$ or $q^r-1$. 
Hence, the order of Jac($N_n)$ is  $$\mathcal K=ord_{N_p} \times ord_{N_q}$$

%Hence, the order of Jac($N_n)$ divides $$\mathcal K=(p^{2r}-1)(q^{2r}-1)$$

\noindent We can describe an element of Jac($N_n)$ as  $D=t(x)$  where $D = [f^2(x),t(x)f(x)]$ represents an element in $Jac(N_n)$ as we mention in the section above. Now we can say that $$t(x)^{s\mathcal K +1} =t(x) $$ in Jac($N_n)$ for any integer $s$.
In the method we proposed, as a first step a public encryption key $e$ is chosen such that $\gcd(e,\mathcal K)=1.$ Hence, we determine a private decryption key $d$ by calculating $$ed=1 \mod \mathcal K$$

\noindent This step of the key generation is given in Algorithm \ref{alg:our_key} below:

\begin{algorithm}[!h]
	\caption{Key Generation for Nodal Curve PKE Method}\label{alg:our_key}
	\begin{algorithmic}[1]
		\renewcommand{\algorithmicrequire}{\textbf{Input:}}
		\renewcommand{\algorithmicensure}{\textbf{Output:}}
		\REQUIRE $f(x), p$ and $q$, both prime and same bit-size
		\ENSURE Public Key $\gets (n,e)$, Private Key $\gets (n,d)$
		\STATE $n \gets p.q$
		\STATE $r \gets deg(f(x))$
		\STATE $ ord_{N_p}  \gets p^r-1$ or $p^r+1$
		\STATE $ ord_{N_q}  \gets  q^r-1$ or $q^r+1$
		\STATE $\mathcal K \gets  ord_{N_p} \times ord_{N_q}$
		\STATE Select $e; \qquad   \gcd(e, \mathcal K)=1;    1< e < \mathcal K $
		\STATE Compute $d; \qquad ed=1 \mod \mathcal K$ \\
		Public Key $\gets (n,e)$\\
		Private Key $\gets (n,d)$

	\end{algorithmic}
	
\end{algorithm}

\noindent In the second step, a message $m$ is embedded in a polynomial $t(x)$. This can be provided in the following way:
Suppose for a moment that $$t(x)=a_{r-1}x^{r-1}+\dots+a_1x+a_0$$
Then one can select $a_{r-1}$ randomly and assign blocks of the message $m$ to other coefficients $a_i$ where $i$ lies between 0 and $r-2$. In the following step, the cipher is obtained by encrypting the message $m$ as follows $$g(x)=t(x)^e$$ where we are taking $e^{th}$ power of $t(x)$ in the Jacobian of $N$ over the ring $\mathbb Z_n$.

Encryption and decryption parts of the method are given in Algorithm \ref{alg:our}. We should note here that, unlike the RSA algorithm, the cipher of $m$ would be distinct  each time that $m$ is encrypted. The first step of the encryption is the selection of a random coefficient of $t(x)$ and distinct $t(x)$s result in distinct ciphers and that is the reason we say the method is a probabilistic public-key algorithm. The next section includes security proofs and the comparison of experimental results between our method and the RSA algorithm.

\begin{algorithm}[H]
	\caption{Encryption and Decryption Steps for Nodal Curve PKE Method}\label{alg:our}
	\begin{algorithmic}[1]
		\renewcommand{\algorithmicrequire}{\textbf{Encryption:}}
		\renewcommand{\algorithmicensure}{\textbf{Decryption:}}
		\REQUIRE $E(m): $ \\
		$\quad\quad\quad\quad\qquad  a \gets$ a random integer \\
		$\quad\quad\quad\quad\qquad  m \gets {\{m_1,m_2,...,m_k\}} $  \\
		$\quad\quad\quad\quad\qquad  t(x)=ax^k+m_kx^{k-1}+m_{k-1}x^{k-2}+ ...+m_2x+m_1 \mod n;  k<r  $ \\
		$\quad\quad\quad\quad\qquad  g(x)=t(x)^e $ where $e^{th}$ power of $t(x)$ is computed in Jac($N_n$).\\
		$\quad\quad\quad\quad\qquad $The cipher $ c \gets g(x)$
		
		\ENSURE  $D(c): $ \\
		$\quad\quad\quad\quad\qquad t(x)=c^d=g(x)^d$ where $d^{th}$ power is taken in Jac($N_n$). \\
		$\quad\quad\quad\quad\qquad $The message  $m$ is obtained from $ t(x)$

	\end{algorithmic}
	
\end{algorithm}

\section{Analysis of the method}

  \subsection{Security Analysis}
  The workflow of the public-key algorithm with nodal curves is similar to RSA. A composite integer $n$ which is a multiplication of two prime integers $p$ and $q$ is one of the main ingredients. Unlike RSA, the group is not directly extracted from the multiplicative group $(\mathbb Z^*_n,\cdot)$, instead the employed group is obtained from a nodal curve $N:y^2=xf^2(x)$ over the ring $(\mathbb Z^*_n,+,\cdot)$. The generalized Jacobian group of the curve $N$ is the other component of ingredient for the algorithm. Proposing of such a Jacobian group is coming from the observation that the discrete logarithm problem (DLP) is hard on the groups where computing is handled in a similar manner. For example, DLP is assumed to be hard on the Jacobian groups of elliptic and hyperelliptic curves where computing in such groups is via Cantor's algorithm. In summary, the proposed public key is inspired by elliptic/hyperelliptic curve cryptography and RSA algorithm. Therefore, the security of aspect of the method is in some sense combination of discrete logarithm problem and integer factorization. In fact, an adversary has only the data flowing through public channels and this data includes the integer $n$, the curve $N$, and the encryption key $e$ in addition to the cipher $c$.  
  
  \begin{theorem} Let $(n, N, e)$ be the public key of user and $c$ be the cipher. If the factors of $n$ are known then one can compute the plaintext $m$.
  	
  	\end{theorem}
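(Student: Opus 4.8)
The plan is to mirror the RSA correctness argument, transplanting Euler's theorem from $(\Z_n^*,\cdot)$ to the finite abelian group Jac($N_n$). The whole point is that an adversary who knows $p$ and $q$ can recover the exact order $\mathcal K$ of Jac($N_n$); once $\mathcal K$ is known the private exponent $d$ follows, and decryption is then a single group exponentiation.

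First I would reconstruct $\mathcal K$ from the factorization. By the isomorphism Jac($N_n$) $\cong$ Jac($N_p$) $\oplus$ Jac($N_q$), the order of Jac($N_n$) is $ord_{N_p}\cdot ord_{N_q}=\mathcal K$, so it suffices to compute each local order. The paper records that $ord_{N_p}\in\{p^r-1,\,p^r+1\}$; the point to pin down is \emph{which} sign occurs, since $d$ is defined modulo the true value of $\mathcal K$ and the wrong sign would yield a useless exponent. By Remark~\ref{Rm1} the kernel of $\sigma$ is the torus attached to the $r$ conjugate nodes of $N$ over $\F_p$, and this torus is split (order $p^r-1$) precisely when the two branches through each node are rational over the residue field $\F_p[x]/(f(x))\cong\F_{p^r}$, and non-split (order $p^r+1$) otherwise. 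The branch slopes at a node over a root $a$ of $f$ are proportional to $\pm\sqrt a$, so the node is split iff $a$ is a square in $\F_{p^r}$, i.e.\ iff $x$ is a square in $\F_p[x]/(f(x))$. This is decided by one exponentiation, testing whether $x^{(p^r-1)/2}\equiv 1 \pmod{f(x),\,p}$. The same computation modulo $q$ fixes $ord_{N_q}$, and multiplying gives $\mathcal K$.

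With $\mathcal K$ in hand, and since key generation guarantees $\gcd(e,\mathcal K)=1$, the extended Euclidean algorithm produces $d$ with $ed\equiv 1\pmod{\mathcal K}$; write $ed=1+s\mathcal K$. The cipher is $c=t(x)^e$ for the message-carrying polynomial $t(x)$, so computing the $d$-th power in Jac($N_n$) yields
$$c^d = t(x)^{ed} = t(x)^{1+s\mathcal K} = t(x)\cdot\big(t(x)^{\mathcal K}\big)^{s}.$$
Because $\mathcal K$ is the order of Jac($N_n$), Lagrange's theorem gives $t(x)^{\mathcal K}=[1,0]$, hence $c^d=t(x)$. By Theorem~\ref{theorem1} the polynomial representative is unique, so this recovers $t(x)$ exactly, and reading off the coefficients $m_1,\dots,m_k$ returns the plaintext $m$ (the leading random coefficient $a$ is discarded).

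The genuinely delicate step is the sign determination in $ord_{N_p}=p^r\pm1$; everything else is the verbatim RSA argument with $(\Z_n^*,\cdot)$ replaced by Jac($N_n$). One caveat worth stating is that every message polynomial $t(x)$ arising from encryption must actually represent a group element, i.e.\ satisfy $\gcd(f(x),x-t^2(x))=1$, so that the exponentiations in encryption and decryption act on the same element; this is exactly the admissibility condition isolated in Lemma~\ref{Lm1}, and is why the scheme restricts to such $t(x)$.
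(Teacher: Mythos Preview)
Your argument follows exactly the paper's route: from the factors $p,q$ one computes the group order $\mathcal K$, inverts $e$ modulo $\mathcal K$ to obtain $d$, and recovers $t(x)=c^d$ by Lagrange's theorem in Jac($N_n$). You are in fact more careful than the paper, which records only $\mathcal K=(p^r\pm1)(q^r\pm1)$ without saying how to pin down the signs; your quadratic-residue test for $x$ in $\F_p[x]/(f(x))$ (detecting whether the conjugate nodes are split) supplies that missing detail.
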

  
  \begin{proof}
  The curve   $N$ is defined by the equation $N:y^2=xf^2(x)$ where $f(x)$ is an irreducible polynomial over the ring $\mathbb Z_n[x]$. The plaintext $m$ is first embedded to an element of Jac($N$) by randomly selecting a polynomial $t(x)$ of degree less than $\deg f(x)$. The cipher  $c$  is just an element of Jac($N$) obtained by computing $t(x)^e$ in the Jacobian group. In order to go back from $c$ to $m$, one needs to know the order of Jac($N$). The order of Jac($N$) can be computed  if the factors of $n$ are available. In fact, assuming the factors are $p$ and $q$, the order becomes $\mathcal K=(p^{\deg f(x)}\pm 1) (q^{\deg f(x)}\pm 1)$. Once the order Jac($N$) is determined, it is easy task to find the decryption key $d$. In other words,
  $e^{-1} \mod \mathcal K$ returns $d$.
  
  \end{proof}
  	
 \begin{proposition}
 	Let $(n, N, e)$ and $c$ be as above. One can go back to the plaintext $m$ if $c^{\frac{1}{e}}$ can be computed in the Jacobian of $N$.
 \end{proposition}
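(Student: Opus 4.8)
The plan is to show that the proposition is, in effect, a reduction: recovering $m$ amounts to extracting an $e$-th root of the cipher in $\mathrm{Jac}(N_n)$, and the decrypted polynomial is precisely that root. First I would recall how encryption is defined. By the embedding step the message blocks $m_1,\dots,m_k$ are stored as the low-order coefficients of a polynomial $t(x)$ of degree less than $\deg f(x)$, so that $D=[f^2(x),t(x)f(x)]$ is an element of $\mathrm{Jac}(N_n)$, and the cipher is $c=t(x)^e$, the $e$-fold group operation applied to $t(x)$. In particular $t(x)$ is, by construction, an $e$-th root of $c$ in the Jacobian.

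The key step is to see that this $e$-th root is unique, so that whatever procedure computes $c^{1/e}$ must return $t(x)$ itself. For this I would use that $\mathrm{Jac}(N_n)\cong\mathrm{Jac}(N_p)\oplus\mathrm{Jac}(N_q)$ is a finite abelian group of order $\mathcal{K}=\mathrm{ord}_{N_p}\cdot\mathrm{ord}_{N_q}$, together with the key-generation requirement $\gcd(e,\mathcal{K})=1$. Since $e$ is coprime to the group order, the $e$-th power map is an automorphism of $\mathrm{Jac}(N_n)$; concretely, if $s(x)^e=c=t(x)^e$ then $s(x)$ and $t(x)$ differ by an element whose order divides $e$, and the only such element is the identity, whence $s(x)=t(x)$. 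Thus $c^{1/e}$ is well defined and equals $t(x)$.

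It then remains only to read off the message. Having obtained $t(x)=c^{1/e}$, the blocks $m_1,\dots,m_k$ are exactly the coefficients of $x^0$ through $x^{k-1}$ in $t(x)$, the coefficient of $x^k$ being the random value $a$ that is simply discarded; this yields $m$.

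The point worth emphasizing is that nowhere does the argument use the private key $d$, the order $\mathcal{K}$, or the factorization of $n$: the entire reduction rests solely on the hypothesized ability to extract an $e$-th root in $\mathrm{Jac}(N_n)$. In this sense the statement is the nodal-curve counterpart of the RSA problem, and the only genuine subtlety is the well-definedness of $c^{1/e}$, handled above via $\gcd(e,\mathcal{K})=1$. One should also keep in mind that group operations over the ring $\mathbb{Z}_n$ can abort when a gcd step encounters a zero divisor, but the hypothesis that $c^{1/e}$ \emph{can} be computed presumes these computations go through.
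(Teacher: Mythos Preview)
Your argument is correct and in fact more detailed than anything the paper offers: the paper states this proposition without proof, following it immediately with a remark rather than a \texttt{proof} environment. The authors evidently regard it as an immediate consequence of the encryption definition $c=t(x)^e$, so that any $e$-th root of $c$ yields $t(x)$ and hence $m$.

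Your write-up goes further by explicitly justifying the \emph{uniqueness} of $c^{1/e}$ via $\gcd(e,\mathcal K)=1$, which is the only point where something non-trivial could fail; the paper leaves this implicit. Your observation about possible aborts over $\mathbb Z_n$ (zero divisors in gcd steps) is also a nice touch the paper does not mention. So your approach is essentially what the authors intend, only made rigorous.
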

  
  \begin{remark}
    RSA algorithm is based on the group $(\mathbb Z^*_n,\cdot)$ and the paper \cite{Naccache} indicates that finding an $e^{th}$ root of an element in this group can be done in a more efficient way than finding the factors of $n$. The representation of elements in the Jacobian of $N$ and computing in the group involves several polynomial arithmetic are the main factors that make DLP harder in such groups than in $(\mathbb Z_n^*,\cdot)$. For the similar reason with DLP, we believe computing $c^{\frac{1}{e}}$ in the Jac($N$) is much harder.
  \end{remark}
  
  \subsection{Performance Analysis}
  
  The above discussion leads to an intuition that employing nodal curves for a public-key algorithm gives more confidence from a security point of view. On the other hand, as computing in the Jacobian group requires several polynomial arithmetic, the advantage of using nodal curves seems to be disappeared. In other words, there is a trade-off between security and efficiency in the use of nodal curves. In addition, in case a solution for the RSA problem shows up, nodal curves themselves might stand as a candidate for a public-key algorithm in practice. In order to emphasize the practical performance of the algorithm, we present tests results  in the following tables (Table \ref{tab:enc} and Table \ref{tab:dec}). In the real-life usage of public-key algorithms, the clients are expected to perform encryption and the decryption is expected to be handled by the servers which are in general much more powerful machines than clients. Therefore, we keep the encryption key $e$ is small in both RSA and our proposed algorithm. 
    
  In all tests, we keep the primes $p$ and $q$ therefore $n$ same. In addition, the encryption key $e$ also stays the same for both algorithms. However, as the degree of $f(x)$ gets larger, $d$ grows exponentially in the algorithm with nodal curves and that is one of the important factors that the algorithm behaves much less efficiently than RSA algorithm in decryption phases. Computing in the Jacobian group requires polynomial arithmetic. In fact, each addition operation involves extended greatest common divisor algorithm of polynomials of degree less than $r=\deg f(x)$. The cost of this operation is bounded by $O(r^2)$ \cite{CohFrey} and as the operations take place over the ring $\mathbb Z_n$, a single addition in the Jacobian group costs $O(\log n r^2)$ bit operations. As a result, the cost of encryption is bounded by $O(r^2\log e\log n )$ whereas the encryption operation of RSA is bounded by $O(\log e \log n  )$. As for the decryption, while it stays as $O(r^2\log d' \log n)$ and $O(\log d\log n)$ for the proposed algorithm and RSA respectively, we should note here that the number $d'$ depends on $n$ and $r$ which makes the decryption operation more costly than RSA.

\begin{table}[H]
	
	\caption{Comparison of encryption phases for proposed algorithm with nodal curves and RSA algorithm. }
	\label{tab:enc}
	\begin{tabular}{ |p{2.5cm}|p{2.5cm}|p{3cm}|p{3cm}|  }

		%\multicolumn{3}{|c|}{} \\
		\hline
		{\bf Public Key Size}& 	{\bf Degree of the $f(x)$ }&{\bf Encryption with nodal curves (Second)} &{\bf Encryption with RSA (Second)}\\
		\hline
		\multirow{4}{2.5cm}{\centering  1024}  & \centering  2 & \centering  0.00269    &	\multirow{4}{3cm}{\centering  0.000034}\\ \cline{2-3} 
		
		& \centering 3& \centering  0.00379  &  \\ \cline{2-3}
		
		& \centering 4 &\centering   0.00418 &\\ \cline{2-3}
		
		&\centering  5 &\centering   0.00492 & \\ 
		\hline
		\multirow{4}{2.5cm}{ \centering  2048}  & \centering  2 & \centering  0.00402     &	\multirow{4}{3cm}{ \centering  0.000102}\\ \cline{2-3} 
		
		& \centering  3& \centering  0.00588  &  \\ \cline{2-3}
		
		& \centering  4 &\centering   0.00677 &\\ \cline{2-3}
		
		&\centering  5 &\centering   0.01238 & \\ 
		
		\hline
		
	\end{tabular}

\end{table}

\begin{table}[H]
	
	\caption{Comparison of decryption phases for the proposed algorithm and RSA algorithm. }
	
		\label{tab:dec}
	\begin{tabular}{ |p{2.5cm}|p{2.5cm}|p{3cm}|p{3cm}|  }

		\hline 
		{\bf Public Key Size}& 	{\bf Degree of the $f(x)$ } & {\bf Decryption with nodal curves (Second)} &{\bf Decryption with RSA (Second)}\\
		\hline
		\multirow{4}{2.5cm}{ \centering 1024}  & \centering 2 & \centering  0.4217    &	\multirow{4}{3cm}{ \centering 0.0032}\\ \cline{2-3} 
		
		& \centering 3&\centering 0.8905  &  \\ \cline{2-3}
		
		&\centering 4 &\centering 1.4045 &\\ \cline{2-3}
		
		&\centering 5 &\centering 2.0498 & \\ 
		\hline
		\multirow{4}{2.5cm}{\centering  2048}  & \centering 2 & \centering 1.3157     &	\multirow{4}{3cm}{\centering  0.0209}\\ \cline{2-3} 
		
		&\centering 3& \centering 2.8490  &  \\ \cline{2-3}
		
		&\centering  4 &\centering  4.5761 &\\ \cline{2-3}
		
		&\centering 5 &\centering  6.9647 & \\ 
		
		\hline
		
	\end{tabular}
	
\end{table}

   The experimental results were obtained while adapting the above computing method in the Jacobian groups which heavily requires polynomial arithmetic. One can avoid polynomial arithmetic for the nodal curves with a smaller arithmetic genus. For example,  if $\deg f(x)=2$ or $3$, one can perform group operation by conducting only integer arithmetic \cite[Chapter 14]{CohFrey}.

\begin{remark}

The work by D{\'e}ch{\`e}ne \cite{dechene2006} also suggests use of generalized Jacobian groups in cryptographic algorithms. Unlike our method, the suggested group operation is not based on Cantor’s method, and the security of suggested cryptographic algorithms is based on the hardness assumption of the discrete logarithm problem (DLP). On the other hand, the work \cite{galbraith} shows that the use of generalized Jacobian groups does not bring any advantage over Jacobian groups in respect to DLP. 
\end{remark}

 \bibliographystyle{amsplain}
 
 \bibliography{references}

\end{document}